\documentclass[11pt]{article}

\usepackage[margin=1.15in]{geometry}
\usepackage{amsmath,amssymb,amsthm,mathtools}
\usepackage{aliascnt}
\usepackage{algorithm}
\usepackage[noend]{algpseudocode}
\usepackage[hidelinks]{hyperref}
\usepackage[nameinlink,capitalise]{cleveref}
\usepackage{xcolor}
\usepackage{microtype}

\newtheorem{theorem}{Theorem}
\newaliascnt{lemma}{theorem}
\newtheorem{lemma}[lemma]{Lemma}
\aliascntresetthe{lemma}
\newaliascnt{proposition}{theorem}
\newtheorem{proposition}[proposition]{Proposition}
\aliascntresetthe{proposition}
\newaliascnt{corollary}{theorem}
\newtheorem{corollary}[corollary]{Corollary}
\aliascntresetthe{corollary}
\newaliascnt{remark}{theorem}

\aliascntresetthe{remark}
\newaliascnt{definition}{theorem}

\aliascntresetthe{definition}

\newcommand{\OPT}{\operatorname{OPT}}
\newcommand{\eps}{\varepsilon}

\hypersetup{
  pdftitle={Approximating Combinatorial Contracts with Arbitrary Costs},
  pdfauthor={Xiaotie Deng, Hanyu Li, Chenghua Liu}
}

\title{\textbf{Approximating Combinatorial Contracts with Arbitrary Costs}}
\author{Xiaotie Deng\textsuperscript{1,2}\thanks{\href{mailto:csdeng@cityu.edu.hk}{\texttt{csdeng@cityu.edu.hk}}}\quad
Hanyu Li\textsuperscript{2}\thanks{\href{mailto:lhydave@pku.edu.cn}{\texttt{lhydave@pku.edu.cn}}}\quad
Chenghua Liu\textsuperscript{3}\thanks{\href{mailto:liuch.russell@gmail.com}{\texttt{liuch.russell@gmail.com}}}\\[0.5em]
\textsuperscript{1}Department of Computer Science, City University of Hong Kong\\[-0.1em]
\textsuperscript{2}CFCS, School of Computer Science, Peking University\\[-0.1em]
\textsuperscript{3}Institute of Software, Chinese Academy of Sciences}
\date{}

\begin{document}
\maketitle

\begin{abstract}
We study single-agent combinatorial contracts under linear payments. Under a
reward share $\alpha\in[0,1]$, an agent chooses a subset $S$ of $n$ hidden
actions, generating reward $f(S)$ at cost $c(S)$, to maximize
$\alpha f(S)-c(S)$, while the principal receives $(1-\alpha)f(S)$. For
nonnegative additive rewards and monotone supermodular
costs, D\"utting et al.\ (SODA 2026) proved an exponential supply-query lower
bound for exact optimization and left open whether a representation-independent
approximation is possible. We resolve this question in a stronger form,
removing all structural assumptions on the cost function: for nonnegative
additive rewards, arbitrary normalized nonnegative set costs, and every
$\varepsilon\in(0,1)$, we give a deterministic $(1-\varepsilon)$-approximation using
$O(n\log(n+1)/\varepsilon)$ supply queries, with query complexity independent
of numerical bit lengths and breakpoint separation. More generally, the same
guarantee and asymptotic query complexity hold for normalized monotone
subadditive rewards and arbitrary normalized nonnegative costs under exact
best-response and value-query access. The key idea is a cost-independent,
reward-side scale certificate. The algorithm tries all singleton rewards as
candidate anchors; one of them brackets the optimal retained share within a
factor $n^2$. Geometric search and monotonicity of the induced response reward
then yield the approximation without enumerating best-response breakpoints.
\end{abstract}

\section{Introduction}

Contract design governs how one party incentivizes another to act on its
behalf.  A central difficulty is \emph{hidden action}: the principal observes
the outcome of a delegated task, but not the costly actions taken by the
agent to produce it.  The principal must therefore base payment on the
observable outcome while anticipating how the payment changes the agent's
behavior.  This classical economic problem has recently become a subject of
growing interest in theoretical computer science and algorithmic game
theory~\cite{DFT24}.

Many delegated tasks involve several actions that may be taken together,
rather than a single choice between effort and no effort.  The
combinatorial-action model introduced by D\"utting, Ezra, Feldman, and
Kesselheim~\cite{DEFK21} captures this setting.  A project has $n$ available
actions, indexed by $[n]=\{1,\ldots,n\}$, and the agent may take any subset
$S\subseteq[n]$.  The set function $f(S)$ is the principal's expected gross
reward when $S$ is taken, while $c(S)$ is the agent's cost.  In the standard
binary-outcome interpretation, the project either succeeds or fails and,
after normalizing the reward from success to one, $f(S)$ is its success
probability.  The set functions allow actions to be substitutes or
complements.  They also create a computational challenge: there are $2^n$
possible action sets.

Because the principal cannot observe $S$, payment cannot depend directly on
the chosen actions.  A linear contract is specified by a share
$\alpha\in[0,1]$: the agent receives an $\alpha$ fraction of the generated
reward.  Facing this contract, the agent chooses a \emph{best response}
\[
 S_\alpha\in\arg\max_{S\subseteq[n]}
                  \{\alpha f(S)-c(S)\}.
\]
Thus the response $S_\alpha$ is simply the action set that maximizes the
agent's expected payment minus cost.  Ties are resolved in the principal's
favor, as formalized in \cref{sec:model}.  Anticipating the response, the
principal receives
\[
             P(\alpha)=(1-\alpha)f(S_\alpha)
\]
and seeks a contract maximizing this quantity.  We denote the optimal
principal utility by $\OPT=P^*=\max_{\alpha\in[0,1]}P(\alpha)$.

An explicit description of a general set function may itself contain
exponentially many numbers.  We therefore work with oracle access.  A value
query evaluates $f(S)$ or $c(S)$, while a best-response query returns
$S_\alpha$ for a specified contract.  For additive rewards, let $v_i$ be the
reward of action $i$, so $f(S)=\sum_{i\in S}v_i$.  Best responses can then be implemented by a standard
supply query to the cost function~\cite{DFGR26}: given an item-price vector
$p$, it returns a set maximizing
\[
                   \sum_{i\in S}p_i-c(S).
\]
Setting $p_i=\alpha v_i$ yields $S_\alpha$.  Supply access thus lets the
algorithm observe the economically relevant response without requiring an
explicit representation of the cost function.

As $\alpha$ varies, the agent's response changes only at
\emph{breakpoints}.  General procedures can enumerate them with oracle
access in time polynomial in their number~\cite{DDPP24,DFG24}.  That number
can nevertheless be exponential~\cite{DEFK21}, and the relevant
breakpoints can be arbitrarily close~\cite{DFGR26}.  Exact optimization asks
for the best breakpoint.  Approximation poses a different question: can the
principal obtain nearly optimal utility without identifying the exact
response change that attains the optimum?  We call an FPTAS
\emph{representation independent} if, for every $\eps\in(0,1)$, it makes a
number of oracle calls polynomial in $n$ and $1/\eps$, independent of
numerical bit lengths, scales, and breakpoint separation.  This is an
exact-real query model: oracle inputs and answers are exact real numbers,
and complexity counts oracle calls rather than bit operations.

D\"utting, Feldman, Gal-Tzur, and Rubinstein~\cite{DFGR26} studied the query
and communication complexity of this problem across several combinations
of submodular and supermodular rewards and costs.  Most relevant here, they
consider additive rewards and monotone supermodular costs.  Additivity means
that every action has an independent reward $v_i$, whereas supermodularity
means that marginal costs increase as more actions are taken.  They prove
that any algorithm computing the exact optimal contract in this setting
requires $2^{\Omega(n)}$ supply queries.

Their tightness discussion leaves a sharp approximation question.  A
weakly polynomial FPTAS, whose complexity depends on the numerical bit
lengths, was known~\cite{DEFK21}.  Removing this dependence, the
representation-independent FPTAS for additive costs in~\cite{DEFK25}
extends to monotone subadditive costs~\cite[Appendix~B]{DFGR26}.  The latter work
explicitly left open
whether instances with supermodular costs, even with additive rewards,
admit any representation-independent approximation, even under
best-response access.

The known approximation method~\cite{DEFK25,DFGR26} has an appealingly
simple form: compute welfare, guess one action from the optimal response,
and search a geometric grid of contracts.  Its scale certificate, however,
is a singleton
\emph{cost}, and the argument relating that cost to the cost of the entire
response relies on additive or subadditive costs.  Supermodular costs reverse
the relevant inequality, while arbitrary costs provide no such comparison.
This is the precise obstruction behind the open problem.

We show that the exponential barrier is specific to exact optimization.
For the class in the lower bound of~\cite{DFGR26}, exact
optimization requires $2^{\Omega(n)}$ supply queries, whereas a
$(1-\eps)$-approximation requires only
$O(n\log(n+1)/\eps)$ such queries.  In fact, the approximation does not need
supermodularity, monotonicity, or any other structure on the cost function.

\begin{theorem}[Supply-oracle resolution]\label{thm:supply}
Let $f(S)=\sum_{i\in S}v_i$ be additive with $v_i\geq0$, and let
$c(\varnothing)=0$ and $c(S)\geq0$ for every $S\subseteq[n]$.  For every
$\eps\in(0,1)$, a deterministic algorithm returns a contract with principal
utility at least $(1-\eps)\OPT$ using
\[
          O\!\left(\frac{n\log(n+1)}{\eps}\right)
\]
standard supply queries to $c$, $O(n)$ reward-value queries, and one
cost-value query.  The bound is independent of numerical values and
breakpoint separation.
\end{theorem}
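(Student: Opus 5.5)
The plan is a grid search over the retained share $\beta=1-\alpha$. It rests on two ingredients: monotonicity of the response reward, and a bracketing certificate built from singleton rewards. As the abstract indicates, the second is the crux.

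\emph{Step 1 (monotone response reward).} I will first show that under principal-favourable tie-breaking, $\alpha\mapsto f(S_\alpha)$ is nondecreasing. Take $\alpha<\alpha'$ and add the two best-response inequalities $\alpha f(S_\alpha)-c(S_\alpha)\ge \alpha f(S_{\alpha'})-c(S_{\alpha'})$ and $\alpha' f(S_{\alpha'})-c(S_{\alpha'})\ge \alpha' f(S_\alpha)-c(S_\alpha)$. This gives $(\alpha'-\alpha)\bigl(f(S_{\alpha'})-f(S_\alpha)\bigr)\ge0$, and no property of $c$ is used. The consequence is as follows. Let $\alpha^*$ be optimal with response $S^*$, and let $\beta^*=1-\alpha^*$, so $\OPT=\beta^* f(S^*)$. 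Then every grid value $\beta\in[\beta^*/(1+\eps),\beta^*]$ satisfies $P(1-\beta)\ge \beta f(S^*)\ge \OPT/(1+\eps)$. Hence it suffices to evaluate $P$ on a geometric grid of ratio $1+\eps$ that covers an interval known to contain $\beta^*$. Each evaluation costs one supply query with prices $p_i=(1-\beta)v_i$; the principal's utility is then $\beta f(S_{1-\beta})$, computed from the singleton values $v_i$. The $O(n)$ reward-value queries read off the $v_i$ once at the start. The single cost-value query will be used either for the welfare/utility computation at $\alpha=1$ or to certify the final answer.

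\emph{Step 2 (anchors and the $n^2$ bracket).} For each $j\in[n]$, I will define a candidate interval $[L_j,U_j]$ with $U_j/L_j\le n^2$, depending only on rewards and on a constant number of supply responses. I then need to show that the interval of some anchor contains $\beta^*$. The natural anchor is $j=\arg\max_{i\in S^*}v_i$, because then $v_j\le f(S^*)\le n v_j$, so the singleton reward pins down $f(S^*)$ within a factor $n$. I will aim to convert this reward bracket into a share bracket via the agent's individual rationality $c(S^*)\le \alpha^* f(S^*)$, together with comparisons of $\OPT$ against $P(1-\beta)$ for $\beta$ at the ends of the interval. Roughly, the argument runs in two directions:
\begin{itemize}
\item If $\beta^*$ were below $L_j$, then raising the retained share to $L_j$ would still induce reward at least $v_j$. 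This would give utility exceeding $\OPT$, a contradiction.
\item If $\beta^*$ were above $U_j$, the factor-$n$ slack in $f(S^*)$ would be violated.
\end{itemize}
If a purely static bracket fails, my fallback is a dynamic certificate. Using monotonicity, I would geometrically locate the largest $\beta$ at which $f(S_{1-\beta})\ge v_j$. I would then take a window of ratio $n^2$ around that threshold, which costs only $O(\log n/\eps)$ extra queries per anchor.

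\emph{Step 3 (counting and output).} I will run the grid of Step 1 over each anchor's interval, which takes $O(\log(n^2)/\log(1+\eps))=O(\log(n+1)/\eps)$ queries per anchor, and return the best contract found. Over $n$ anchors this totals $O(n\log(n+1)/\eps)$ supply queries. Nothing in the count depends on magnitudes or on breakpoint spacing, because the grid ratio and interval width are fixed. Finally, I will rescale $\eps$ so that $1/(1+\eps)\ge 1-\eps$.

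The main obstacle is Step 2, proving that some singleton anchor brackets $\beta^*$ within a factor $n^2$ with no assumption on $c$. Step 1 and the counting are routine. I would first try the anchor $j=\arg\max_{i\in S^*}v_i$. For the lower end I would use the optimality comparison $\OPT\ge P(1-\beta)$ at the candidate endpoints, together with the monotonicity of Step 1 to transfer the bound $f(\cdot)\ge v_j$ across contracts.
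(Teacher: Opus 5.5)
Your Step~1, the counting in Step~3, and the anchor $j=\arg\max_{i\in S^*}v_i$ with $v_j\le f(S^*)\le nv_j$ all match the paper. Step~2, however, has a genuine gap.

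Since $\beta^*=\OPT/f(S^*)$, bracketing $f(S^*)$ within a factor $n$ localizes $\beta^*$ only if some \emph{observable} quantity also brackets $\OPT$ within a polynomial factor. Your proposal never supplies one.

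Your first bullet also runs monotonicity backwards. Raising the retained share from $\beta^*$ to $L_j$ lowers the agent's share below $\alpha^*$. Step~1 then gives $f(S_{1-L_j})\le f(S^*)$, so the response at $L_j$ can have reward far below $v_j$, even $\varnothing$.

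The dynamic fallback would give a valid bracket but has no cost bound. For $j=j^*$, the threshold $\beta_j=\max\{\beta: f(S_{1-\beta})\ge v_j\}$ does satisfy $\beta_j/n\le\beta^*\le\beta_j$. Locating it, though, needs a starting and stopping scale that your plan does not provide. Already for $n=1$, $v_1=1$, $c(\{1\})=1-\eta$, one has $\beta_1=\eta$. A geometric or binary search for $\beta_1$ starting from a fixed scale such as $\beta=1$ then needs a number of queries that grows without bound as $\eta\to0$. Starting the search at the welfare scale $W/v_j$ fixes the start. But showing that it then stops within $O(\log n/\eps)$ steps is exactly the inequality below.

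The missing idea is the welfare--profit comparison. Let $W=\max_S\{f(S)-c(S)\}$, computed by one supply query at $\alpha=1$ and one cost-value query. Then $\OPT\le W\le n\OPT$ for arbitrary normalized nonnegative costs.

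The first inequality is the individual-rationality fact you mention, and it gives $\beta^*\le\min\{1,W/v_j\}$. The second is the substantive lemma (\cref{lem:welfare}), proved as follows:
\begin{itemize}
\item Write $W=g(1)-g(0)=\sum_t(a_{t+1}-a_t)f(S_t)$ for the convex envelope $g(\alpha)=\max_S\{\alpha f(S)-c(S)\}$.
\item Group the envelope's pieces into epochs in which no new action appears.
\item After exchanging sums, each newly appearing block $B_j$ contributes at most $(1-a_{\ell_j})f(B_j)\le(1-a_{\ell_j})f(S_{\ell_j})\le\OPT$, and there are at most $n$ blocks.
\end{itemize}

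Combined with your anchor, this gives the static bracket $W/(n^2v_j)\le\beta^*\le\min\{1,W/v_j\}$ of width $n^2$, computed before any search. Your Steps~1 and~3 then finish the proof. You should also keep every queried $\beta<1$, so that each supply query has $\alpha>0$ and its larger-cost tie-breaking coincides with the principal-favouring one.
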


This theorem directly resolves the open problem in~\cite{DFGR26}.  Its
proof reveals that the relevant structure lies on
the reward side.  A set function is normalized if it is zero on the empty
set.  Recall that $f$ is subadditive if
$f(S\cup T)\leq f(S)+f(T)$ for every $S,T\subseteq[n]$.
The supply-oracle result is a special case of the following statement.

\begin{theorem}[General reward theorem]\label{thm:general}
Let $f:2^{[n]}\to\mathbb{R}_{\geq0}$ be normalized, monotone, and
subadditive, and let $c(\varnothing)=0$ and $c(S)\geq0$ for every
$S\subseteq[n]$.  For every $\eps\in(0,1)$, there is a deterministic
algorithm that returns a contract whose principal utility is at least
$(1-\eps)\OPT$.  It uses
\[
          O\!\left(\frac{n\log(n+1)}{\eps}\right)
\]
exact best-response and value queries.
More precisely, these are reward-value queries; one additional cost-value
query is used to evaluate welfare.
The bound is independent of numerical values and breakpoint separations.
\end{theorem}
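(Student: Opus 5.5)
The plan is to search geometrically over the retained share $\beta=1-\alpha$, using singleton rewards as scale anchors. Fix an optimal contract $\alpha^*$ with response $S^*=S_{\alpha^*}$. Write $F^*=f(S^*)$ and $\beta^*=1-\alpha^*$, so that $\OPT=\beta^*F^*$; assume $\OPT>0$, since otherwise there is nothing to prove.

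The first step is the standard monotonicity of the induced response reward: $\alpha\mapsto f(S_\alpha)$ is nondecreasing. This follows by adding the two best-response inequalities at $\alpha<\alpha'$, and tie-breaking in the principal's favor keeps it consistent. Consequently, any contract with $\beta\in[\beta^*/(1+\eps),\beta^*]$ satisfies $\alpha\geq\alpha^*$, hence $f(S_\alpha)\geq F^*$, and its utility is at least $\beta^*F^*/(1+\eps)\geq(1-\eps)\OPT$. It therefore suffices to place, for the correct anchor, a geometric grid of ratio $1+\eps$ over an interval known to contain $\beta^*$. If that interval has multiplicative width $n^2$, the grid has $O(\log(n+1)/\eps)$ points. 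Each point costs one best-response query and one reward-value query, and there are $n$ anchors, which gives the $O(n\log(n+1)/\eps)$ bound. The algorithm returns the best contract evaluated.

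The second step is the reward-side certificate. Let $i^*\in S^*$ maximize $f(\{i\})$ over $S^*$, and set $a=f(\{i^*\})$. By monotonicity and subadditivity of $f$,
\[
  a\leq F^*\leq\sum_{i\in S^*}f(\{i\})\leq n\,a .
\]
Hence $\beta^*=\OPT/F^*$ lies in $[\OPT/(na),\OPT/a]$, an interval of ratio $n$. Since $i^*$ is unknown, the algorithm tries every $i$ as the anchor, which costs $O(n)$ singleton reward queries. What remains is to replace the unknown $\OPT$ by a computable quantity within a further factor $n$.

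I would try to pin $\OPT$ using the welfare $W=\max_S\{f(S)-c(S)\}$. This is obtained from the best response at $\alpha=1$ plus the one cost-value query. By individual rationality and the principal's tie-breaking it gives $\OPT\leq W$. For the reverse direction, I would compare $\OPT$ with the utility of a specific anchor-based contract. The natural candidate is the contract at which the agent is willing to take $\{i\}$, which yields a lower bound on $\OPT$ in terms of $a$ and $\beta$. If one could show that $\beta^*\in[\beta_{\mathrm{lo}}(a),\beta_{\mathrm{hi}}(a)]$ with $\beta_{\mathrm{hi}}/\beta_{\mathrm{lo}}\leq n^2$, for explicit endpoints depending only on $a$, $W$, and reward queries, the argument would be complete.

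Deriving such cost-independent endpoints is the main obstacle. The costs are arbitrary, so no inequality between $c(\{i\})$ and $c(S^*)$ is available, and the bracket must come purely from reward-side quantities and the monotone response curve. A possible fallback is to let the upper end be $1$ and argue separately that very small $\beta$ is never needed. For this, I would try a case split on whether $\beta^*\geq a/(n\,f([n]))$ or below it, and in the second case exhibit a grid point with comparable utility using monotonicity of $f(S_\alpha)$. I am not confident this fallback closes the gap.
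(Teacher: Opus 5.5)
There is a genuine gap, and it sits exactly where you flagged it. Your first two steps match the paper: response-reward monotonicity, the geometric grid, and the reward anchor $a\le f(S^*)\le na$. But you never prove that $W$ is within a factor $n$ of $\OPT$, and the proof depends on that.

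With only $\OPT\le W$ you get the upper endpoint $\beta^*\le\min\{1,W/a\}$. The lower endpoint $\OPT/(na)$, however, has no computable substitute, so the bracket for the correct anchor has unbounded width. Your fallback cannot fix this, because $\beta^*$ is determined by costs and no reward-side quantity bounds it from below. Take one action with reward $1$ and cost $1-\eta$. Then $\beta^*=\eta$, every contract with $\beta>\eta$ induces the empty response and zero utility, and all reward data are independent of $\eta$. A polynomial grid starting at $1$ therefore misses $\beta^*$ when $\eta$ is exponentially small. The contract at which the agent is willing to take $\{i\}$ is likewise cost-dependent, and for arbitrary costs it says nothing about $\OPT$.

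The missing ingredient is the welfare--profit inequality $W\le n\,\OPT$, which holds for arbitrary normalized nonnegative costs. The paper proves it by integrating the agent's utility envelope
\[
g(\alpha)=\max_S\{\alpha f(S)-c(S)\}.
\]
This envelope is convex and piecewise linear, with $g(0)=0$, $g(1)=W$, and slope $f(S_t)$ on the piece $[a_t,a_{t+1})$. Hence $W=\sum_t(a_{t+1}-a_t)f(S_t)$.

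Now group the pieces into $d\le n$ epochs. A new epoch starts at index $\ell_r$ whenever the response contains an action not seen earlier; let $B_r$ be the block of newly seen actions. By monotonicity and subadditivity of $f$, every slope in epoch $r$ is at most $\sum_{j\le r}f(B_j)$. Exchanging the order of summation gives
\[
W\le\sum_{j=1}^{d}(1-a_{\ell_j})f(B_j)\le\sum_{j=1}^{d}(1-a_{\ell_j})f(S_{\ell_j})\le d\,\OPT\le n\,\OPT.
\]
The middle inequality uses $B_j\subseteq S_{\ell_j}$. The next one uses the fact that, under principal-favoring tie-breaking, the contract $a_{\ell_j}$ induces a response with reward at least $f(S_{\ell_j})$.

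Combining this with your anchor inequality gives
\[
\frac{W}{n^2a}\le\beta^*\le\min\Bigl\{1,\frac{W}{a}\Bigr\},
\]
a bracket of ratio at most $n^2$ that depends only on $W$ and $a$. From there your grid argument goes through as written.
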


Additivity is needed only to implement a best-response query as a standard
supply query to the cost function.  Combining \cref{thm:supply} with the
exact lower bound gives the following separation.

\begin{corollary}[Exact--approximate separation]\label{cor:separation}
In the exact-real supply-oracle model, combinatorial contracts with additive
rewards and monotone supermodular costs exhibit an exponential separation
between exact and approximate query complexity: exact optimization requires
$2^{\Omega(n)}$ supply queries in the worst case, whereas a
$(1-\eps)$-approximation uses
$O(n\log(n+1)/\eps)$ supply queries.
\end{corollary}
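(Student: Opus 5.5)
The corollary follows by putting two existing facts side by side; no new argument is required. For the exact side, I would cite the lower bound of D\"utting, Feldman, Gal-Tzur, and Rubinstein~\cite{DFGR26}. For additive rewards and monotone supermodular costs, any algorithm that computes an optimal contract with supply queries must make $2^{\Omega(n)}$ queries in the worst case. I would first check that their query model is the one used here: exact real inputs and answers, with cost measured in oracle calls. I would also check that their hard instances have nonnegative additive rewards and normalized, nonnegative, monotone supermodular costs. If their construction is stated with different conventions (for instance unnormalized costs, or a different tie-breaking rule), I would add a short reduction. Shifting $c$ by the constant $c(\varnothing)$ leaves every best response unchanged and changes no supply-query answer except by that known constant. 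Nonnegativity then follows from monotonicity together with $c(\varnothing)=0$.

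For the approximate side, I would observe that this class is contained in the hypotheses of \cref{thm:supply}. Additive rewards with $v_i\ge 0$ are exactly the reward assumption there. A normalized monotone cost satisfies $c(S)\ge c(\varnothing)=0$, so it is a normalized nonnegative set cost; supermodularity is simply not used. Hence \cref{thm:supply} gives a deterministic $(1-\eps)$-approximation with $O(n\log(n+1)/\eps)$ supply queries, independent of numerical values. The extra $O(n)$ reward-value queries and single cost-value query also need addressing. The $v_i$ are part of the instance description, so reward values need no oracle call. The cost-value query $c([n])$ can be obtained from one supply query: a price vector with very large prices forces $[n]$ as the answer when $c$ is monotone, or, more safely, the welfare step of the algorithm can be implemented through the supply oracle directly. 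So the count stays at $O(n\log(n+1)/\eps)$ supply queries.

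The only delicate point is bookkeeping: the two bounds must refer to the same oracle model and the same instance class, so that the separation is meaningful. Once that is confirmed, combining the lower bound of~\cite{DFGR26} with the upper bound of \cref{thm:supply} yields the exponential gap $2^{\Omega(n)}$ versus $O(n\log(n+1)/\eps)$ for every fixed $\eps\in(0,1)$.
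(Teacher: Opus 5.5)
Your overall route matches the paper's. The corollary is obtained by placing \cref{thm:supply} next to the $2^{\Omega(n)}$ exact lower bound of \cite{DFGR26}. Your containment check is also correct: a normalized monotone cost is nonnegative, and \cref{thm:supply} never uses supermodularity. The problem is the paragraph where you try to remove the single cost-value query. That step does not work, for four reasons:
\begin{enumerate}
\item A supply query returns only a maximizing set, not its objective value or its cost. No choice of prices turns a supply query into a value query for $c([n])$.
\item The value the algorithm actually needs is $c(T)$ for the welfare-maximizing response $T$ at $\alpha=1$, not $c([n])$.
\item Choosing ``very large prices'' presupposes a numerical bound on the costs, which the algorithm does not have in the exact-real, representation-independent model.
\item Running the welfare step through the supply oracle yields $T$ but not $W=f(T)-c(T)$. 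Nothing in the paper shows that the scale $W$ can be recovered from supply queries alone with a value-independent number of calls.
\end{enumerate}
So your claim that the count ``stays at'' supply queries only is unsupported as written.

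You do not need that claim. The separation compares supply-query counts in a model where value queries are also available, and the exact lower bound is robust to this. The hidden-perturbation argument behind it (recalled in Appendix~\ref{app:high-accuracy}) simulates each supply query on the perturbed equal-revenue instance with $O(n^2)$ cost-value queries. It therefore persists even when the algorithm may also make polynomially many value queries; this is the form stated in \cref{thm:approx-lower}. Since an optimal contract is in particular $(1-\gamma_n)$-approximate, \cref{thm:approx-lower} also gives the exact lower bound in that model. Hence the one cost-value query and the $n$ reward values used by \cref{thm:supply} keep both sides of the comparison in the same oracle model. Replace the flawed paragraph with this observation.

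A minor point: shifting $c$ by a constant leaves every supply answer literally unchanged, since the answers are sets, not values.
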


Our proof retains the simple welfare--guess--search template, but changes
what fixes the scale.  Instead of using a singleton cost, we anchor the
search by a singleton \emph{reward}.  This makes the argument independent
of the cost structure.

In the nontrivial case $\OPT>0$, let $\alpha^*$ be an optimal contract, let
$S^*$ be its induced response, and let $\delta^*=1-\alpha^*$ be the
principal's optimal retained share.  Our analysis
combines two structural observations.
First, a welfare--profit comparison relates the maximum welfare, or total
surplus, to the principal's optimal utility:
\[
       \OPT\leq W\leq n\OPT,
       \qquad W=\max_{S\subseteq[n]}\{f(S)-c(S)\}.
\]
Second, if $j^*$ is a largest-singleton-reward action in $S^*$, monotonicity
and subadditivity give
\[
       f(\{j^*\})\leq f(S^*)\leq nf(\{j^*\}).
\]
Since $\OPT=\delta^*f(S^*)$, the two observations yield
\[
 \frac{W}{n^2f(\{j^*\})}
       \leq \delta^*
       \leq \min\!\left\{1,\frac{W}{f(\{j^*\})}\right\}.   \tag{1}
\]

The algorithm tries each of the $n$ singleton rewards as the unknown
$f(\{j^*\})$.  One guess therefore produces a multiplicative interval of
width at most $n^2$ containing $\delta^*$.  A geometric grid covers that
interval with $O(\log(n+1)/\eps)$ contracts.  Monotonicity of the reward
induced by the agent's response then shows that one grid point gives a
$(1-\eps)$-approximation.  In contrast to exact optimization, the algorithm
never enumerates response breakpoints, searches over the numerical range of
costs or rewards, or separates cases according to the cost class.  The
lower bound hides the identity of the best breakpoint; the approximation
needs only its multiplicative scale.

As follow-up to this open-problem resolution, we clarify why approximation
escapes the exact lower bound and where this advantage ends.  On the
equal-revenue family, ordinary approximation needs only the
multiplicative scale of the optimal breakpoint, whereas relative error
$\gamma_n=\Theta(2^{-n})$ forces identification of its hidden identity and
restores exponential query complexity.  Separately, we show robustness to
additive best-response error $\tau$, with guarantee
$(1-\eps)\OPT-3\tau/\eps$, and establish tightness of the scale certificate.

This paper is organized as follows.  \Cref{sec:related} reviews related work,
\cref{sec:model} fixes the oracle model, and \cref{sec:algorithm} develops the
reward-anchored scale certificate and the resulting FPTAS\@.
\Cref{sec:followup} presents the follow-up results on robustness,
high-accuracy query complexity, and tightness; their proofs appear in the
appendices.  \Cref{sec:conclusion} concludes with a discussion of the
remaining boundary.

\section{Related work}\label{sec:related}

The original work on the combinatorial-action model~\cite{DEFK21} gave an
exact algorithm for gross-substitutes rewards, proved NP-hardness for
submodular rewards, and gave a weakly polynomial FPTAS for general monotone
rewards and costs with best-response access.  Ezra, Feldman, and
Schlesinger~\cite{EFS24} strengthened the computational hardness: with
submodular rewards and additive costs, no polynomial-time value-query
algorithm obtains a constant-factor approximation unless $\mathsf{P} =
\mathsf{NP}$.  Complementing these hardness results, breakpoint-enumeration
algorithms run in time polynomial in the number of breakpoints and give
exact tractability for supermodular rewards and submodular
costs~\cite{DDPP24,DFG24}.

D\"utting, Ezra, Feldman, and Kesselheim~\cite{DEFK25} subsequently gave a
representation-independent FPTAS for general monotone rewards and additive
costs; its welfare--profit comparison, welfare query, item guess, and
geometric search motivate part of our analysis.  D\"utting, Feldman,
Gal-Tzur, and Rubinstein~\cite{DFGR26} observed that this approach extends to
monotone subadditive costs and established the exact supply-query lower
bound for supermodular costs.  Our result retains the
representation-independent query bound while removing every structural
assumption on the cost function.

Recent work studies exact optimization for special reward and cost
classes~\cite{FY25}, demand-type representations~\cite{BaldwinEtAl26}, and
inapproximability for budgeted variants~\cite{FeldmanEtAl26}.  None of these
results gives the subadditive-reward, arbitrary-cost oracle guarantee proved
here.

\section{Model, objective, and oracle access}\label{sec:model}

There are $n$ hidden actions, and the agent may take any set $S\subseteq[n]$.
Taking $S$ creates expected gross reward $f(S)$ for the principal and cost
$c(S)$ for the agent.  The principal cannot pay for the unobserved actions
directly; it can only specify how the observable reward will be divided.

\paragraph{Contract, response, and objective.}
The principal first chooses a linear contract $\alpha\in[0,1]$, promising the
agent an $\alpha$ share of the generated reward.  The agent's utility from
$S$ is therefore $\alpha f(S)-c(S)$, so the contract induces the best response
\[
 S_\alpha\in\arg\max_{S\subseteq[n]}
        \{\alpha f(S)-c(S)\}.
\]
Ties are broken toward larger $f(S)$, as preferred by the principal.  The
principal keeps the remaining share $1-\alpha$; its utility and optimum are
\[
 P(\alpha)=(1-\alpha)f(S_\alpha),\qquad
          P^*=\OPT=\max_{\alpha\in[0,1]}P(\alpha).
\]
Given an accuracy parameter $\eps\in(0,1)$, our algorithm must return a
contract $\widehat\alpha$ with
$P(\widehat\alpha)\geq(1-\eps)P^*$.  The output is the contract parameter;
the agent subsequently selects $S_{\widehat\alpha}$.

For our main result, the reward $f:2^{[n]}\to\mathbb{R}_{\geq0}$ is normalized, monotone, and
subadditive: $f(\varnothing)=0$, adding actions cannot decrease reward, and
$f(S\cup T)\leq f(S)+f(T)$.  The cost
$c:2^{[n]}\to\mathbb{R}_{\geq0}$ is arbitrary except that
$c(\varnothing)=0$.  The maximum total surplus is
\[
              W=\max_{S\subseteq[n]}\{f(S)-c(S)\}
\]
and serves only as a scale benchmark; the principal's objective remains
$P(\alpha)$.  Since the agent maximizes $f(S)-c(S)$ at $\alpha=1$, one
best-response query computes $W$.

Because $f$ and $c$ may have exponential-size descriptions, the algorithm
uses exact queries: a best-response query at $\alpha$ returns $S_\alpha$,
while value queries return $f(S)$ or $c(S)$.  Complexity counts oracle calls,
not numerical bit lengths.

For additive rewards, write $f(S)=\sum_{i\in S}v_i$, where $v_i=f(\{i\})$.
A best-response query at $\alpha>0$ is the standard supply query to $c$ at
prices $p_i=\alpha v_i$:
\[
       S\in\arg\max_T\left\{\sum_{i\in T}p_i-c(T)\right\},
\]
The standard larger-cost tie-breaking agrees with principal-favoring
tie-breaking because tied sets satisfy
$c(S)-c(T)=\alpha(f(S)-f(T))$.

\section{Reward-anchored geometric search}\label{sec:algorithm}

The algorithm searches for the scale of the principal's optimal retained
share.  A contract whose retained share lies just below the optimal one is
approximately optimal because the reward of a best response grows with the
agent's share.  The main difficulty is to locate this scale without numerical
or structural assumptions on costs.  We first formalize the approximation
target, then derive a cost-independent welfare benchmark and a reward-side
scale certificate, and finally turn the certificate into a geometric search.

\subsection{From approximation to scale localization}

It is useful to parameterize a contract by
\[
                         \delta=1-\alpha,
\]
the fraction of generated reward retained by the principal.  Fix an optimal
contract--response pair $(\alpha^*,S^*)$ and write
$\delta^*=1-\alpha^*$, so that $P^*=\delta^*f(S^*)$.

The starting point of the known search framework is response-reward
monotonicity.

\begin{lemma}[Response-reward monotonicity~\cite{DEFK21,DEFK25,DFGR26}]
\label{lem:monotone}
If $0\leq\alpha\leq\alpha'\leq1$, then
$f(S_\alpha)\leq f(S_{\alpha'})$.
\end{lemma}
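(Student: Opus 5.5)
The plan is the standard revealed-preference (exchange) argument. It compares the two best responses against each other at the two contract levels and adds the resulting inequalities. No structure on $c$ is used beyond finiteness of its values; the argument also ignores the form of $f$. Write $S=S_\alpha$ and $S'=S_{\alpha'}$.

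\textbf{Step 1.} Optimality of $S$ at $\alpha$ gives $\alpha f(S)-c(S)\geq \alpha f(S')-c(S')$. Optimality of $S'$ at $\alpha'$ gives $\alpha' f(S')-c(S')\geq \alpha' f(S)-c(S)$.

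\textbf{Step 2.} Adding the two inequalities cancels the costs $c(S)$ and $c(S')$, since they are real numbers appearing on both sides. This leaves $(\alpha'-\alpha)\bigl(f(S')-f(S)\bigr)\geq 0$.

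\textbf{Step 3.} If $\alpha'>\alpha$, divide by $\alpha'-\alpha$ to get $f(S)\leq f(S')$. If $\alpha=\alpha'$, the two queries return the same response because the tie-breaking rule is deterministic, so equality holds.

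The only point that needs care is the role of ties. The additive argument alone yields only the weak inequality for some pair of maximizers, and the claim concerns the specific responses $S_\alpha$. Here the weak inequality is exactly what is claimed, so any tie-breaking rule suffices for $\alpha<\alpha'$.

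When $\alpha=\alpha'$ we need $S_\alpha$ to be well defined. This holds because ties are broken toward larger $f$: every maximizer selected at a given $\alpha$ has the same, maximal, reward value, so $f(S_\alpha)$ is determined by $\alpha$ alone.

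I expect no real obstacle. The only things to check are the case $\alpha=\alpha'$ and that subtracting costs is legitimate, which holds because $c$ takes finite real values.
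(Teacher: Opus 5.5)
Your proposal is correct and is the same argument the paper relies on: it says the lemma ``follows immediately by comparing the two best-response inequalities,'' which is your exchange argument of adding the two optimality conditions so the costs cancel. One minor wording point: the summed inequality $(\alpha'-\alpha)\bigl(f(S')-f(S)\bigr)\geq0$ holds for \emph{every} pair of maximizers, not just some pair, so tie-breaking is needed only to make $S_\alpha$ well defined.
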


The response sets themselves need not be nested.  The intuition is that a
larger agent share places greater weight on generated reward, so the reward
of a utility-maximizing response cannot decrease.  Formally, the statement
follows immediately by comparing the two best-response inequalities; we
use it as a standard fact from the earlier framework.

This monotonicity identifies exactly what the search must find.  If a query
uses a retained share satisfying
\[
                 (1-\eps)\delta^*\leq\delta\leq\delta^*, \tag{2}
\]
then its agent share $\alpha=1-\delta$ is at least $\alpha^*$.  Hence
$f(S_\alpha)\geq f(S^*)$, and its principal utility satisfies
\[
 P(\alpha)=\delta f(S_\alpha)
 \geq(1-\eps)\delta^*f(S^*)
 =(1-\eps)P^*.
\]
Thus approximation reduces to locating the multiplicative scale of the
unknown $\delta^*$; the identity of the exact optimal breakpoint is not
needed.

A uniform additive grid is insufficient because $\delta^*$ has no
inverse-polynomial lower bound.  A geometric grid avoids dependence on
numerical magnitudes, provided it starts from a polynomial-width
multiplicative interval containing $\delta^*$.  Indeed, if
$b/R\leq\delta^*\leq b$ for $R=\operatorname{poly}(n)$, then the points
$b(1-\eps)^k$ contain one satisfying~(2) after
$O(\log R/\eps)$ steps.  This is the geometric-search step used in prior
work~\cite{DEFK25,DFGR26}; here we invoke it rather than reprove it as a
separate abstract lemma.

Earlier algorithms obtain the starting interval by guessing the cost of an
action in $S^*$.  Additive or subadditive costs then relate that singleton
cost to $c(S^*)$~\cite{DEFK25,DFGR26}.  With arbitrary set costs, singleton
costs provide no information about the cost of the whole response.  Our task
is therefore to construct an enumerable anchor for $\delta^*$ using rewards
instead.

\subsection{The reward-anchored scale certificate}

Recall that
\[
           W=\max_{S\subseteq[n]}\{f(S)-c(S)\}.
\]
Equivalently, $W$ is the agent's maximum utility at $\alpha=1$.
The lower comparison $P^*\leq W$ is immediate: the agent can always choose
the empty set, so an induced response has nonnegative agent utility, and its
welfare is at least the principal's utility.  The substantive direction is
the following upper bound.  Its proof views the agent's maximum utility as
an upper envelope of lines.  As the offered share increases, we charge
welfare growth when a new action first appears on that envelope.  At most
$n$ actions can appear for the first time, and each charge is bounded by the
principal's optimal utility.

\begin{lemma}\label{lem:welfare}
For normalized monotone subadditive rewards and arbitrary normalized
nonnegative costs,
\[
                         P^*\leq W\leq nP^*.
\]
\end{lemma}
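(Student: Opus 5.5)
The lower bound $P^*\leq W$ is the argument already given in the text: the induced response $S_{\alpha^*}$ has nonnegative agent utility, so $W\geq f(S_{\alpha^*})-c(S_{\alpha^*})\geq(1-\alpha^*)f(S_{\alpha^*})=P^*$. For the upper bound I plan to write $W$ as an integral of response rewards over contracts, split that integral by action using subadditivity, and bound each action's share by $P^*$ using monotonicity of $f$.

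\emph{Step 1: $W$ as an integral.} Define the agent's value function $U(\alpha)=\max_{S}\{\alpha f(S)-c(S)\}$. It is a maximum of finitely many lines, so it is convex and piecewise linear. Since $c\geq0$ and $c(\varnothing)=0$, we have $U(0)=0$. Also $U(1)=W$. At every $\alpha$, the right derivative of $U$ is the largest slope among the lines attaining the maximum, namely $\max\{f(S):S\text{ maximizes }\alpha f(S)-c(S)\}$. With principal-favoring tie-breaking this is exactly $f(S_\alpha)$. Integrating the right derivative of a piecewise-linear function gives
\[
W=U(1)-U(0)=\int_0^1 f(S_\alpha)\,d\alpha .
\]
This identity is the step I expect to need the most care. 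The points to check are that ties are broken toward larger $f$, so that $f(S_\alpha)$ really equals the right derivative, and that there are only finitely many breakpoints.

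\emph{Step 2: charging to actions.} By subadditivity and normalization, $f(S_\alpha)\leq\sum_{i\in S_\alpha}f(\{i\})$. For each action $i$ that appears in some $S_\alpha$ with $\alpha\in[0,1]$, let $\alpha_i=\inf\{\alpha:i\in S_\alpha\}$; actions that never appear contribute nothing. Exchanging sum and integral gives
\[
W\leq\sum_i f(\{i\})\,\bigl|\{\alpha\in[0,1]:i\in S_\alpha\}\bigr|\leq\sum_i(1-\alpha_i)f(\{i\}).
\]
Intuitively, each action is charged once, at the contract where it first enters the response.

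\emph{Step 3: each charge is at most $P^*$.} Take any contract $\alpha$ with $i\in S_\alpha$, and such $\alpha$ can be chosen arbitrarily close to $\alpha_i$ from above. By monotonicity of $f$, $P(\alpha)=(1-\alpha)f(S_\alpha)\geq(1-\alpha)f(\{i\})$. Letting $\alpha\downarrow\alpha_i$ gives $(1-\alpha_i)f(\{i\})\leq\sup_\alpha P(\alpha)=P^*$. (If the infimum is attained, this is immediate.)

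Summing over the at most $n$ actions yields $W\leq nP^*$.
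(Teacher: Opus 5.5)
Your proof is correct and is essentially the paper's argument: write $W$ as the integral of response rewards along the agent's upper envelope, charge it to actions at the contract where they first enter the response, and bound each charge by $P^*$ using monotonicity of $f$. The only difference is granularity. You charge single actions via $f(S)\le\sum_{i\in S}f(\{i\})$ and pass to the limit at the infimum $\alpha_i$; writing the pointwise bound $f(S_\alpha)\le\sum_{i:\alpha_i\le\alpha}f(\{i\})$ before integrating also removes any measurability concern about the selection. The paper instead groups simultaneously entering actions into blocks $B_j\subseteq S_{\ell_j}$ at piece endpoints, which avoids the limit and gives the slightly sharper count $W\le dP^*$, where $d$ is the number of entry epochs.
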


\begin{proof}
Let $(\alpha^*,S^*)$ be optimal.  Since the empty set is available,
$c(S^*)\leq\alpha^*f(S^*)$, and hence
\[
 P^*=(1-\alpha^*)f(S^*)
 \leq f(S^*)-c(S^*)
 \leq W.
\]
If $W=0$, the claim is immediate.  Assume henceforth that $W>0$.
Let $g(\alpha)$ denote the agent's maximum utility under contract $\alpha$.
As a function of the offered share, it is the convex, piecewise-linear upper
envelope
\[
        g(\alpha)=\max_S\{\alpha f(S)-c(S)\}.
\]
Because $\varnothing$ is available and costs are nonnegative, $g(0)=0$;
also $g(1)=W$.  Let
\[
0=a_0<a_1<\cdots<a_k<a_{k+1}=1
\]
be the endpoints of its positive-length linear pieces, and let $S_t$ be an
active response on $[a_t,a_{t+1})$.  Choose the representatives at endpoints
consistently with principal-favoring tie-breaking.  The slopes
$f(S_t)$ are nondecreasing, and integration gives
\[
 W=g(1)-g(0)
   =\sum_{t=0}^k(a_{t+1}-a_t)f(S_t).                      \tag{3}
\]

We now expose actions only when they first occur on the response curve.
Let $\ell_1$ be the first index with $S_{\ell_1}\ne\varnothing$.  Having
chosen $\ell_r$, let the cumulative set of actions seen so far be
\[
 T_r=\bigcup_{t=0}^{\ell_r}S_t
\]
and let $\ell_{r+1}$ be the first index larger than $\ell_r$ for which
$S_{\ell_{r+1}}\nsubseteq T_r$.  If this process produces $d$ epochs, set
$\ell_{d+1}=k+1$ and $T_0=\varnothing$.  Every epoch introduces a new
action, so $d\leq n$.  Moreover, for
$\ell_r\leq t<\ell_{r+1}$ we have $S_t\subseteq T_r$.

Let $B_r=T_r\setminus T_{r-1}$ be the block of newly appearing actions.
Monotonicity and subadditivity imply
\[
f(S_t)\leq f(T_r)\leq\sum_{j=1}^r f(B_j)
\qquad(\ell_r\leq t<\ell_{r+1}).
\]
Thus $[a_{\ell_r},a_{\ell_{r+1}})$ is an epoch during which every response
uses only
$T_r=B_1\mathbin{\dot\cup}\cdots\mathbin{\dot\cup}B_r$.
Responses before the first epoch are empty.  Partitioning~(3) into these
epochs, applying the preceding inequalities, and exchanging sums gives
\begin{align*}
W
&\leq
\sum_{r=1}^d
(a_{\ell_{r+1}}-a_{\ell_r})
\sum_{j=1}^r f(B_j)\\
&=
\sum_{j=1}^d(1-a_{\ell_j})f(B_j).
\end{align*}
No intermediate response introduces a new action.  Hence every action in
$B_j$ first appears in $S_{\ell_j}$, so $B_j\subseteq S_{\ell_j}$.
Monotonicity therefore yields
\[
(1-a_{\ell_j})f(B_j)
\leq(1-a_{\ell_j})f(S_{\ell_j})
\leq P^*.
\]
Summing over $d\leq n$ blocks proves the claim.
\end{proof}

The upper-bound proof adapts the response-curve argument
of~\cite{DEFK25}.  The extension important here is that it uses no
additivity or other structure of the cost function.

The welfare benchmark controls $P^*$ without revealing the unknown response
$S^*$.  To recover the missing reward scale, choose an action of $S^*$ with
largest singleton reward.  Subadditivity makes this singleton an
$n$-approximation to $f(S^*)$, while all $n$ possible singleton rewards can
be queried in advance.  Combining these observations gives the certificate
that replaces the cost anchor used in previous work.

\begin{lemma}[Reward-anchored scale]\label{lem:bracket}
Suppose $P^*>0$.  Let $(\alpha^*,S^*)$ be optimal and let
$\delta^*=1-\alpha^*$ be the principal's optimal retained share.  Choose an
action $j^*\in S^*$ with largest singleton reward:
\[
         j^*\in\arg\max_{j\in S^*}f(\{j\}).
\]
Then
\[
\frac{W}{n^2f(\{j^*\})}
\leq\delta^*
\leq
\min\!\left\{1,\frac{W}{f(\{j^*\})}\right\}.              \tag{4}
\]
The ratio between the upper and lower endpoints is at most $n^2$.
\end{lemma}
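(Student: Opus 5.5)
The plan is to combine the welfare bracket of \cref{lem:welfare} with a reward-side bracket on $f(S^*)$, and then divide through by $f(S^*)$ using the identity $P^*=\delta^*f(S^*)$.

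The preliminary step is to check that everything is well defined. Since $P^*>0$, we have $f(S^*)>0$ and $\delta^*>0$. In particular $S^*\neq\varnothing$, because $f(\varnothing)=0$, so $j^*$ exists. I then need $f(\{j^*\})>0$, which I will read off from the subadditive bound below: if every singleton in $S^*$ had zero reward, then $f(S^*)\le\sum_{j\in S^*}f(\{j\})=0$, a contradiction.

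Next I will bracket $f(S^*)$ by the anchor. Monotonicity and $j^*\in S^*$ give $f(\{j^*\})\le f(S^*)$. Subadditivity, iterated over the singletons of $S^*$, gives
\[
f(S^*)\le\sum_{j\in S^*}f(\{j\})\le |S^*|\,f(\{j^*\})\le n\,f(\{j^*\}),
\]
where the middle inequality uses the choice of $j^*$ as a maximizer.

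For the upper endpoint, $\delta^*=1-\alpha^*\le1$ is trivial. Otherwise, $\delta^*=P^*/f(S^*)\le W/f(\{j^*\})$, using $P^*\le W$ from \cref{lem:welfare} together with $f(S^*)\ge f(\{j^*\})$. For the lower endpoint, $\delta^*=P^*/f(S^*)\ge (W/n)/(n f(\{j^*\}))$, using $P^*\ge W/n$ from \cref{lem:welfare} together with $f(S^*)\le nf(\{j^*\})$.

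The ratio claim follows directly. The upper endpoint $\min\{1,W/f(\{j^*\})\}$ is at most $W/f(\{j^*\})$, which is exactly $n^2$ times the lower endpoint. If instead the minimum equals $1$, it is even smaller, so the ratio can only decrease.

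There is no real obstacle here. The only care needed is the positivity bookkeeping, namely $f(\{j^*\})>0$ and $S^*\neq\varnothing$, so that every division is legitimate. The substantive work already sits in \cref{lem:welfare}.
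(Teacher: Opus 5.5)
Your proposal is correct and matches the paper's proof: the sandwich $f(\{j^*\})\le f(S^*)\le n f(\{j^*\})$ from monotonicity and subadditivity, combined with $P^*\le W\le nP^*$ from \cref{lem:welfare} and the identity $P^*=\delta^*f(S^*)$. Your positivity check (a positive singleton reward in $S^*$) and your endpoint-ratio argument are also the same as the paper's.
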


\begin{proof}
Since $P^*>0$, the set $S^*$ contains an item of positive singleton reward:
otherwise subadditivity would give $f(S^*)=0$.  Monotonicity,
subadditivity, and the definition of $j^*$ give
\[
 f(\{j^*\})\leq f(S^*)\leq
 \sum_{j\in S^*}f(\{j\})\leq nf(\{j^*\}).                 \tag{5}
\]
By definition,
\[
                         P^*=\delta^*f(S^*).              \tag{6}
\]
The right inequality in~(4) follows from \cref{lem:welfare}, (5), and~(6):
$\delta^*f(\{j^*\})\leq P^*\leq W$, while $\delta^*\leq1$.

For the left inequality, \cref{lem:welfare} and~(5) imply
\[
        W\leq nP^*
          =n\delta^*f(S^*)
          \leq n^2\delta^*f(\{j^*\}).
\]
Finally, if $W/f(\{j^*\})\leq1$, the endpoint ratio is exactly $n^2$;
otherwise it is $n^2f(\{j^*\})/W<n^2$.
\end{proof}

\subsection{The resulting FPTAS}

The oracle model supplies exactly the observable quantities in the
certificate.  A best-response query at $\alpha=1$ returns a
welfare-maximizing action set
$T\in\arg\max_S\{f(S)-c(S)\}$, from which reward and cost values give
$W=f(T)-c(T)$.  Singleton reward queries enumerate all possible anchors.

The algorithm now directly implements the preceding scale bracket.
One query at $\alpha=1$ computes welfare, $n$ singleton queries list all
possible reward anchors, and one geometric grid is searched for each
positive anchor.  No response breakpoint is computed or approximated.

We parameterize a contract by the principal's retained share
$\delta=1-\alpha$.  Let $q=1-\eps$ be the geometric contraction ratio, and
let $K$ be the number of grid steps needed to span a factor $n^2$, namely the
least nonnegative integer such that
\[
                         q^K\leq\frac{1}{n^2}.             \tag{7}
\]
The shift by one grid position in \cref{alg:fptas} ensures that every queried
contract has $\alpha>0$.  This is immaterial for the general best-response
oracle and ensures correct implementation by the standard supply oracle in
the additive-reward special case.

\begin{algorithm}[H]
\caption{Reward-anchored geometric search}\label{alg:fptas}
\begin{algorithmic}[1]
\Require Actions $[n]$; value oracles for a normalized monotone subadditive
         reward $f$ and a normalized nonnegative cost $c$.
\Require An exact best-response oracle returning $S_\alpha$ at each queried
         contract $\alpha\in[0,1]$; accuracy $\eps\in(0,1)$.
\Ensure A contract $\widehat\alpha\in[0,1]$.
\State Query each action's singleton reward $a_i=f(\{i\})$, $i\in[n]$.
\State Query the best-response oracle at $\alpha=1$, denote its response by
       $T$, and compute $W=f(T)-c(T)$.
\If{$W=0$}
   \State \Return $\alpha=1$.
\EndIf
\State Let $q=1-\eps$ and choose the least $K\geq0$ satisfying~(7).
\For{every $i$ with $a_i>0$}
   \State Set the candidate bracket's upper endpoint
          $b_i\gets\min\{1,W/a_i\}$.
   \For{$k=1,\ldots,K+1$}
      \State Set the retained share $\delta_{i,k}\gets b_i q^k$ and
             contract share $\alpha_{i,k}\gets1-\delta_{i,k}$.
      \State Query $S_{\alpha_{i,k}}$ and its value
             $f(S_{\alpha_{i,k}})$.
      \State Record $\delta_{i,k}f(S_{\alpha_{i,k}})$.
   \EndFor
\EndFor
\State \Return a queried contract with maximum recorded principal utility.
\end{algorithmic}
\end{algorithm}

\begin{proof}[Proof of \cref{thm:general,thm:supply}]
If $W=0$, every best response satisfies
\[
P(\alpha)
\leq f(S_\alpha)-c(S_\alpha)
\leq W=0,
\]
where the first inequality follows from
$c(S_\alpha)\leq\alpha f(S_\alpha)$ because $\varnothing$ is available.
Thus the returned contract is optimal.

Assume $W>0$ and consider the iteration $i=j^*$ supplied by
\cref{lem:bracket}.  Let $b=b_{j^*}$ be the correct bracket's upper endpoint.
The bracket gives
$b\geq\delta^*$ and
\[
\frac{b}{n^2}
\leq\frac{W}{n^2f(\{j^*\})}
\leq\delta^*.                                               \tag{8}
\]
By~(7), the last grid point satisfies
$bq^{K+1}\leq q\delta^*<\delta^*$.

If $b=\delta^*$, the first queried point is $q\delta^*$.
Otherwise, let $k$ be the first queried grid index with
$bq^k\leq\delta^*$.  The preceding point is larger than $\delta^*$, so
\[
              q\delta^*<bq^k\leq\delta^*.                  \tag{9}
\]
In either case the algorithm queries some $\delta$ satisfying
$q\delta^*\leq\delta\leq\delta^*$.
For its contract $\alpha=1-\delta$, we have $\alpha\geq\alpha^*$.
\Cref{lem:monotone} therefore implies
$f(S_\alpha)\geq f(S^*)$, and hence
\[
P(\alpha)
 =\delta f(S_\alpha)
 \geq q\delta^*f(S^*)
 =(1-\eps)P^*.
\]
The algorithm returns the best queried contract.

It remains to count queries.  Since
$-\log(1-\eps)\geq\eps$,
\[
K
\leq\left\lceil\frac{2\log n}{\eps}\right\rceil
\quad (n\geq2).
\]
Thus there are $O(n\log(n+1)/\eps)$ best-response and value queries.
If $f$ is additive, all queried response values are computed from the
$n$ singleton values.  Every queried $\delta$ is strictly smaller than one,
so the corresponding $\alpha>0$ best response is implemented by the supply
query with prices $p_j=\alpha f(\{j\})$ and the correct tie-breaking.  Only
one additional cost-value query, used to compute $W$, is required.
\end{proof}

\section{Beyond the FPTAS: robustness and limits}\label{sec:followup}

The efficiency of the FPTAS rests on two features of the model.  It uses
exact best responses, and it searches only for the multiplicative scale of
the optimal retained share rather than the identity of its breakpoint.  The
first feature is robust: additive response error produces only an additive
loss in principal utility.  The second has genuine limits: exponentially
fine accuracy requires recovering the hidden breakpoint identity, and the
$n^2$ localization width cannot be improved in general.

\subsection{Robustness to approximate best responses}\label{sec:robust}

The reward-anchored search continues to work when the agent's response is
computed only approximately.  A
$\tau$-best-response oracle, queried at $\alpha$, may return any set
$\widetilde S_\alpha$ satisfying
\[
\alpha f(\widetilde S_\alpha)-c(\widetilde S_\alpha)
\geq
\max_S\{\alpha f(S)-c(S)\}-\tau.
\]
The guarantee below is for the response selection implemented by the oracle;
no consistency across queries is required.  We compare the realized principal
utility $(1-\alpha)f(\widetilde S_\alpha)$ with the exact-response benchmark
$P^*$ defined in \cref{sec:model}.

\begin{theorem}[Approximate-oracle robustness]\label{thm:robust}
Under the assumptions of \cref{thm:general}, suppose best-response queries
are replaced by $\tau$-best-response queries and exact reward and cost
values remain available.  For every $\eps\in(0,1)$, a deterministic
algorithm makes
\[
O\!\left(\frac{n\log(n+1)}{\eps}\right)
\]
$\tau$-best-response and reward-value queries, plus one cost-value query, and
returns a queried contract and response with principal utility at least
\[
                         (1-\eps)P^*-\frac{3\tau}{\eps}.
\]
\end{theorem}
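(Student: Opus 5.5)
The plan is to run \cref{alg:fptas} essentially unchanged, replacing every exact best-response query by a $\tau$-best-response query. Two adjustments are needed. First, the welfare estimate is inflated by $\tau$ in the bracket endpoint. Second, the grid is made a constant number of steps longer and run with a rescaled accuracy $\eps'=\Theta(\eps)$. Each queried contract is returned together with the response the oracle actually produced, and the recorded value $\delta f(\widetilde S_\alpha)$ is an exact reward-value query. So the recorded number is exactly the realized principal utility, and the algorithm may simply return the maximum. No cross-query consistency is then needed. The query count stays $O(n\log(n+1)/\eps)$, since the grid length changes only by constant factors.

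\textbf{Step 1 (approximate monotonicity).} This replaces \cref{lem:monotone}. Let $\alpha>\alpha^*$ and let $\widetilde S$ be any $\tau$-best response at $\alpha$. I would add the near-optimality inequality $\alpha f(\widetilde S)-c(\widetilde S)\geq\alpha f(S^*)-c(S^*)-\tau$ to the exact optimality of $S^*$ at $\alpha^*$, namely $\alpha^*f(S^*)-c(S^*)\geq\alpha^*f(\widetilde S)-c(\widetilde S)$. This gives $(\alpha-\alpha^*)(f(\widetilde S)-f(S^*))\geq-\tau$. In terms of retained shares,
\[
\delta f(\widetilde S)\geq\delta f(S^*)-\frac{\delta\,\tau}{\delta^*-\delta}.
\]
The loss term is harmless as long as the queried $\delta$ is bounded away from $\delta^*$. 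Concretely, if $q^2\delta^*\leq\delta\leq q\delta^*$ with $q=1-\eps'$, then $\delta/(\delta^*-\delta)\leq q/(1-q)\leq1/\eps'$. Hence $P\geq q^2P^*-\tau/\eps'\geq(1-2\eps')P^*-\tau/\eps'$. With a geometric grid of ratio $q$ passing from above $\delta^*$ to below $q^2\delta^*$, some grid point lies in $(q^2\delta^*,q\delta^*]$. This is why I extend the grid by one or two extra steps.

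\textbf{Step 2 (bracket with approximate welfare).} The query at $\alpha=1$ now returns $\widetilde T$ with $\widetilde W=f(\widetilde T)-c(\widetilde T)\in[W-\tau,W]$. I set $b_i=\min\{1,(\widetilde W+\tau)/a_i\}$, so by \cref{lem:bracket} we have $b_{j^*}\geq\delta^*$. For the lower end I split into cases.
\begin{itemize}
\item If $P^*\leq c_0\tau/\eps$ for a suitable constant $c_0$, the target $(1-\eps)P^*-3\tau/\eps$ is nonpositive and any returned contract suffices. Its realized utility is at least $0$, since $f\geq0$.
\item Otherwise $W\geq P^*\gg\tau$, so $\widetilde W+\tau\leq W+\tau\leq(1+O(\eps))W$.
\end{itemize}
In the second case, \cref{lem:bracket} gives $\delta^*\geq W/(n^2a_{j^*})\geq b_{j^*}/\bigl(n^2(1+O(\eps))\bigr)$. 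The bracket width is therefore at most $n^2(1+O(\eps))$, which a constant number of additional grid steps absorbs.

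\textbf{Step 3 (constants).} This is the step I expect to be the main obstacle: tuning the constants so that the final bound is exactly $(1-\eps)P^*-3\tau/\eps$ rather than an unspecified $O(\tau/\eps)$. I would first try $\eps'=\eps/2$, which turns Step 1 into $(1-\eps)P^*-2\tau/\eps$. The threshold in Step 2 would then be set at $P^*\leq\tau/\eps$ (or similar). Below the threshold the bound is trivial. Above it, the inflation $(W+\tau)/W\leq1+\eps$ costs at most a couple of extra grid steps, and any slack is charged to the third $\tau/\eps$. If the exact constant $3$ does not fall out, I would instead compare directly with the threshold case, using $(1-\eps)P^*-3\tau/\eps\leq0$ whenever $P^*\leq3\tau/(\eps(1-\eps))$, and adjust the grid extension accordingly.
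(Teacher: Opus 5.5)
Your proposal is correct and follows essentially the paper's argument: you add the near-optimality inequality at $\alpha$ to the exact optimality of $S^*$ at $\alpha^*$, deliberately undershoot by one grid step into $(q^2\delta^*,q\delta^*]$, widen the bracket by the welfare slack $\tau$, and dispose of the small-welfare case as trivial. The paper differs only in bookkeeping: it splits on the observable test $L\leq\tau$, widens the bracket by a factor $2$ via $U<2L$, and uses $\rho=\eps/3$. Your choices $\eps'=\eps/2$ and threshold $P^*\leq\tau/\eps$ already settle your Step~3, and in fact give the stronger bound $(1-\eps)P^*-2\tau/\eps$, because $(1-\eps/2)^{-2}\geq1+\eps$ lets two extra grid steps absorb the inflation.
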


To see the proof idea, query the approximate oracle at $\alpha=1$.  Its
realized welfare gives an additive-$\tau$ estimate of $W$, which widens the
scale bracket by only a constant factor.  The grid then deliberately moves
one step below the target retained share.  The resulting separation from
$\alpha^*$ creates enough agent-utility margin to absorb the response error,
at a principal-utility cost of order $\tau/\eps$.  Appendix~\ref{app:robust}
gives the full argument and constants.

\subsection{Limits of scale localization}\label{sec:limits}

The equal-revenue family used for the exact lower bound
of~\cite{DFGR26} has exponentially many critical contracts with the same
principal utility.  Its hidden perturbation changes the identity of the
unique optimum while barely changing its scale.  Ordinary approximation can
ignore this identity, but accuracy comparable to the perturbation cannot.

\begin{theorem}[Approximation requires accuracy dependence]
\label{thm:approx-lower}
There is a sequence $\gamma_n=\Theta(2^{-n})$ such that any randomized
algorithm which, with constant success probability, computes a
$(1-\gamma_n)$-approximate contract for every instance with additive reward
and monotone supermodular cost requires $2^{\Omega(n)}$ supply queries in
expectation.  The conclusion continues to hold with polynomially many
cost-value queries.
\end{theorem}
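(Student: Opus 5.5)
The plan is to reduce \cref{thm:approx-lower} to the exact lower bound of~\cite{DFGR26} by showing that, on their equal-revenue family, a $(1-\gamma_n)$-approximate contract already identifies the hidden optimal breakpoint. That family (additive reward, monotone supermodular cost) has exponentially many candidate critical contracts, say $N=2^{\Omega(n)}$ of them, indexed by a hidden identity $h$. In the unperturbed instance every critical contract gives the same principal utility $U$. The perturbation raises the utility at the hidden breakpoint $h$ to $U(1+\eta_n)$ with $\eta_n=\Theta(2^{-n})$, or equivalently makes it the unique optimum by a margin of that order. First, I would recall this construction explicitly and record three quantitative facts. One is the gap: every contract not inducing the response at $h$ has $P(\alpha)\leq U\leq P^*/(1+\eta_n)$. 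Another is that $P^*$ is attained only on an interval of contracts associated with $h$. The last is that the perturbation changes the principal values only at exponentially small relative scale, which is why $\gamma_n$ must be exponentially small.

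Second, I would set $\gamma_n$ strictly below $\eta_n/(1+\eta_n)$, for instance $\gamma_n=\eta_n/4=\Theta(2^{-n})$. Any contract $\widehat\alpha$ with $P(\widehat\alpha)\geq(1-\gamma_n)P^*$ then exceeds $U$. By the gap, $\widehat\alpha$ must induce the hidden response $S_h$. From $\widehat\alpha$ one supply query (or the additive reward values) reveals $S_{\widehat\alpha}=S_h$, and this determines $h$. So an approximation algorithm yields, with one extra query, an algorithm that finds the hidden optimum, and hence the exact optimal contract, with the same success probability.

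Third, I would invoke the lower bound of~\cite{DFGR26} for finding $h$. If their argument is deterministic, I would restate it distributionally via Yao's principle, with $h$ uniform over $N$ candidates. The key point is that each supply query at prices not tied to $h$ returns an answer identical to the unperturbed instance unless it hits the hidden breakpoint's neighbourhood. Each query therefore eliminates $O(1)$ (or polynomially many) candidates, and constant success probability needs $\Omega(N/\mathrm{poly}(n))=2^{\Omega(n)}$ expected queries. For the cost-value extension, I would check that the perturbation changes $c$ only on sets specific to $h$. A value query thus also reveals information about at most one candidate, and polynomially many of them shift the count only by a polynomial amount.

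The main obstacle is the first step: extracting from~\cite{DFGR26} a perturbation whose optimality margin is exactly $\Theta(2^{-n})$ relative to $P^*$ and whose non-hidden critical contracts all stay below $U$. The cost values there may need a sum-of-exponentially-many terms normalization, so the margin could come out as $2^{-\Theta(n)}$ instead. If so, I would rescale the perturbation amplitude, which is a free parameter as long as supermodularity and monotonicity are preserved, to make the margin exactly order $2^{-n}$. I would also verify that the indistinguishability argument is insensitive to that amplitude.
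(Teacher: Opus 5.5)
Your skeleton is the paper's: one hidden cost reduction on a set of the equal-revenue family, then $\gamma_n$ chosen below that set's relative advantage so that any $(1-\gamma_n)$-approximate contract must induce, and hence reveal, the hidden set, then Yao's principle over the hidden index. Your claim that each supply query ``eliminates polynomially many candidates'' needs to be stated precisely, because queries come at arbitrary adaptive prices. The precise form is the sparse-supply fact: for every price vector $p$, the set $\mathcal{D}_\zeta(p)$ of $\zeta$-near-optimal supplies under the known base cost has $O(n^2)$ elements. So a supply query to $c_k$ is simulated by $O(n^2)$ value queries, and a value query distinguishes $c_k$ from $c$ only if it hits $S_k$.

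\textbf{The gap is in the step you flagged as the main obstacle, and your proposed fix does not close it.}

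\emph{The advantage is not uniform.} Lowering the cost of $S_k$ by $\zeta$ moves its entry point to $\alpha_k-\zeta$, so the principal gets $1+\zeta k$. The relative margin is therefore $\zeta f(S_k)=\zeta k$, proportional to the hidden set's reward, not a uniform $\eta_n$.

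\emph{The amplitude is not free.} The sparse-supply simulation requires
\[
\zeta<\tfrac12\min_t(\alpha_{t+1}-\alpha_t)=\frac{1}{2N(N-1)},
\]
and the paper's supermodularity check uses $\zeta<1/N^2$. So the indistinguishability argument is \emph{not} insensitive to the amplitude, and $\zeta=O(4^{-n})$. With such a $\zeta$, a hidden set of small index $k$ has margin only $O(k4^{-n})$. For those candidates a $\gamma_n=\Theta(2^{-n})$ approximation need not induce $S_k$ at all, and no admissible rescaling changes this.

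\emph{The missing idea is to restrict the hidden index to sets of reward $\Theta(2^n)$.} The paper takes $\zeta=1/(8N^2)$ and $\mathcal{K}=\{\lceil N/2\rceil,\dots,N\}$. Every candidate then attains utility at least $1+1/(16N)$, while every other response gives at most $1$. With $\gamma_n=1/(32N)$, any $(1-\gamma_n)$-approximation must induce $S_k$. Since $|\mathcal{K}|=\Theta(N)$, the Yao counting still gives $\Omega(N)$ value queries, hence $2^{\Omega(n)}$ supply queries.

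\emph{Run the argument on this restricted family yourself.} Its amplitude and candidate set need not coincide with those of the cited exact lower bound. So you cannot invoke that bound as a black box. You should check directly that $c_k$ stays monotone and supermodular and that the response order is preserved, and then apply the simulation and Yao argument to this family, as the paper does.
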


This establishes a high-accuracy boundary, not a lower bound for constant or
inverse-polynomial accuracy.  The construction and proof appear in
Appendix~\ref{app:high-accuracy}.

A separate boundary concerns the width of the certificate itself.  Its
$n^2$ factor combines $W\leq nP^*$ with
$f(S^*)\leq n\max_{i\in S^*}f(\{i\})$.  The following proposition shows that
the two losses can occur simultaneously, while the first remains tight even
in the monotone-supermodular cost class.

\begin{proposition}[Tightness of reward-anchored localization]
\label{prop:tightness}
\begin{enumerate}
\item[(i)] For every $m\geq2$ and fixed $\eta\in(0,1)$, there is an instance
with $n=2m$ actions, additive rewards, and arbitrary normalized nonnegative
costs having a unique optimal principal contract $(\alpha^*,S^*)$ such that
\[
\frac{W}{(1-\alpha^*)\max_{i\in S^*}f(\{i\})}
=\frac{m(1+\eta+m/2)}{1+\eta}=\Omega(n^2).
\]
\item[(ii)] For every $n\geq2$ and $r>1$, there is an instance with
normalized additive rewards and monotone supermodular costs satisfying
\[
                         \frac{W}{P^*}=n-\frac{n-1}{r}.
\]
\end{enumerate}
\end{proposition}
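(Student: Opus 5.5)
For part~(ii) I would use an equal-revenue chain in which every linear piece of the agent's utility envelope contributes the same amount to welfare. Fix $r>1$. Take additive rewards $v_1=1$ and $v_t=r^{t-1}-r^{t-2}$ for $t\geq2$, so the prefix rewards are $f_t:=f([t])=r^{t-1}$; rescale at the end if ``normalized'' requires $\max_i v_i=1$. Set breakpoints $a_t=1-r^{-(t-1)}$, so that $a_1=0$ and $(1-a_t)f_t=1$ for every $t$. Define $c(S)=h(f(S))$, where $h$ is the convex piecewise-linear function with $h(0)=0$ and slope $a_t$ on $[f_{t-1},f_t]$ (with $f_0=0$), extended by slope $1$ beyond $f_n$. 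A nondecreasing convex function of a nonnegative additive function is monotone and supermodular, and it is normalized.

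Next I would identify the best responses. For every $\alpha$, the unconstrained maximum of $\alpha x-h(x)$ over $x\geq0$ is attained at a kink $f_t$ of $h$, and each kink is realized by the prefix $[t]$. Hence best responses are prefixes, and $S_\alpha=[t]$ for $\alpha\in[a_t,a_{t+1})$ under principal-favoring tie-breaking. It follows that $P(\alpha)\leq(1-a_t)f_t=1$, so $P^*=1$. By~(3) in the proof of \cref{lem:welfare},
\[
W=\sum_{t=1}^{n}(a_{t+1}-a_t)f_t=(n-1)\Bigl(1-\tfrac1r\Bigr)+1=n-\tfrac{n-1}{r},
\]
since each of the first $n-1$ pieces contributes $(r^{-(t-1)}-r^{-t})r^{t-1}=1-1/r$ and the last piece contributes $(1-a_n)f_n=1$.

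For part~(i), arbitrary costs let me make both losses in \cref{lem:bracket} occur at once. I would assign finite moderate costs only to a chain of sets $T_1\subset\cdots\subset T_m$ and a prohibitively large cost to every other nonempty set, so that only chain sets can be best responses. The $2m$ items would split into $m$ ``chain'' items, added one per step to produce welfare growth of the equal-revenue type as in~(ii), and $m$ equal-reward ``bulk'' items. The bulk items would all enter at the optimal step, so that $f(S^*)$ is about $m\max_{i\in S^*}f(\{i\})$. Costs along the chain are fixed by the tangency relation $c(T_t)-c(T_{t-1})=a_t\,(f(T_t)-f(T_{t-1}))$.

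The breakpoints would be calibrated so that every step yields principal utility close to $P^*$, with the optimal step boosted by the factor $1+\eta$ to make the optimum unique. Computing $W$ via~(3) and dividing by $(1-\alpha^*)\max_{i\in S^*}f(\{i\})$ should then produce $m(1+\eta+m/2)/(1+\eta)$. The $m/2$ term comes from the $m$ roughly equal welfare increments charged against a per-item reward, and the extra factor $m$ comes from the bulk block.

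The main obstacle is this calibration in~(i): choosing the rewards, the block structure and the breakpoints so that the ratio comes out exactly as stated while uniqueness and tie-breaking still hold. I would first fix the rewards and the desired pattern of per-piece principal utilities, then solve the linear tangency equations for the breakpoints and costs, and finally check the resulting costs are nonnegative with $c(\varnothing)=0$.
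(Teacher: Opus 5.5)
Your part~(ii) is correct and takes a different route from the paper. The paper makes singletons the only possible responses: it uses rewards $v_i\propto r^{i-1}$ and cost $\sum_{i\in S}d_i+2\binom{|S|}{2}$, where the quadratic term both enforces supermodularity and excludes every multi-item response. You instead let prefixes respond and get monotonicity and supermodularity for free from $c=h\circ f$ with $h$ convex and nondecreasing. The envelope analysis then becomes a one-dimensional concave maximization. Strictly, ``best responses are prefixes'' fails for some $r$ (for $r=2$, $f(\{3\})=f([2])$), but only $f(S_\alpha)$ enters $P$ and $W$, so this is harmless. Both constructions reach $W/P^*=1+(n-1)(1-1/r)$ by the same telescoping.

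Part~(i), however, is not proved. The statement asserts an exact value, so exhibiting the parameters and checking them is the whole proof, and you explicitly leave that calibration open. Phrases such as ``close to $P^*$'' and ``roughly equal increments'' cannot give the identity. Here is what must be fixed.
\begin{itemize}
\item The optimal response must be the \emph{first} nonempty response and must equal exactly the block $B$ of $m$ unit-reward items. Any earlier chain item would lie in $S^*$ and inflate $\max_{i\in S^*}f(\{i\})$.
\item Take $c(B)=m-(1+\eta)$. Then $B$ enters at $\alpha^*=1-(1+\eta)/m$, and both its principal utility and its welfare equal $1+\eta$.
\item After $B$ there must be $m$ further responses, not the $m-1$ your chain $T_1\subset\cdots\subset T_m$ with $T_1=B$ would give. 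They need rewards $R_r=2^rm$ entering at $\alpha_r=1-1/R_r$. Then each yields principal utility exactly $1$ and raises welfare by exactly $(1-\alpha_r)(R_r-R_{r-1})=1/2$.
\item Finally, check $0<\alpha^*<\alpha_1<\cdots<\alpha_m<1$ (using $m>1+\eta$ and $1+\eta>1/2$). Check also that utility falls between entry points, so $(\alpha^*,B)$ is the unique optimum. Then $W=1+\eta+m/2$, which gives the ratio.
\end{itemize}
The paper realizes the later responses as singletons $\{b_r\}$ with $f(\{b_r\})=2^rm$ and gives every other nonempty set cost $2f(S)$. Your nested version would also work, with $T_r=B\cup\{c_1,\dots,c_r\}$, $f(\{c_r\})=2^{r-1}m$, and your tangency recurrence. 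None of this appears in your proposal, though.
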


The constructions are given in Appendix~\ref{app:tightness}.  They establish
tightness of the localization certificate, not a matching query lower bound,
because the bracket width enters the algorithm only logarithmically.

\section{Conclusion and discussion}\label{sec:conclusion}

We resolve the representation-independent approximation question for
additive rewards and arbitrary set costs, and more generally for monotone
subadditive rewards under best-response access.  The resulting separation
between exact and approximate query complexity is explained by reward-side
localization: approximation needs the multiplicative scale of an optimal
breakpoint, while exact optimization may require its hidden identity.  The
robustness and tightness results show both the reach and the limits of this
explanation.

The main remaining question is whether representation-independent
approximation with arbitrary costs extends to monotone rewards beyond
subadditivity.  Our proof uses subadditivity twice: to obtain the polynomial
welfare--profit comparison $W\leq nP^*$ and to approximate the reward of an
optimal response by one of its singleton rewards.  Progress beyond this
boundary therefore requires a different polynomial welfare gap, a different
enumerable anchor, or an impossibility result showing that no such
localization is possible.

\section*{Acknowledgments}
OpenAI Codex assisted with proof development and verification.  The authors
independently checked all arguments and assume full responsibility for the
content of this manuscript.

\clearpage
\appendix
\small

\section{Proof of approximate-response robustness}\label{app:robust}

\begin{proof}[Proof of \cref{thm:robust}]
Query the approximate oracle at $\alpha=1$, let $T$ be its response, and let
$L=\max\{0,f(T)-c(T)\}$ be the resulting lower estimate of optimal welfare.
Since $W$ is optimal welfare,
\[
                         W-\tau\leq f(T)-c(T)\leq W.
\]
If $L\leq\tau$, then $W\leq L+\tau\leq2\tau$.  Returning $\alpha=1$
gives zero principal utility, which already satisfies the claimed bound
because $P^*\leq W$.

Assume $L>\tau$ and let $U=L+\tau$ be the corresponding welfare upper bound.
Then
\[
                         L\leq W\leq U<2L.
\]
Let $(\alpha^*,S^*)$ be an exact optimal contract--response pair, let
$\delta^*=1-\alpha^*$ be its retained share, and let $a=f(\{j^*\})$ be the largest singleton
reward in $S^*$.  As in \cref{lem:bracket},
\[
                    \frac{W}{n^2a}\leq\delta^*
                    \leq\min\left\{1,\frac{W}{a}\right\}.
\]
Let $b=\min\{1,U/a\}$ be the widened bracket's upper endpoint.  The preceding
inequalities imply
\[
                         \frac{b}{2n^2}\leq\delta^*\leq b. \tag{10}
\]
Indeed, the upper bound follows from $W\leq U$.  For the lower bound,
$U<2W$; this proves the claim directly when $b=U/a$, and when $b=1$ it
gives $W>a/2$.

Allocate $\rho=\eps/3$ of the error budget to oracle error and let
$q=1-\rho$ be the grid ratio.  For every positive singleton reward
$a_i$, search the grid $b_i,b_iq,b_iq^2,\ldots$, where
$b_i=\min\{1,U/a_i\}$, far enough to cover multiplicative width $2n^2$,
and query one further point.  For the correct anchor, let $k\geq0$ be the
least index such that $bq^k\leq\delta^*$.  If $k=0$, then $b=\delta^*$;
otherwise minimality gives $q\delta^*<bq^k\leq\delta^*$.  Hence the
additional point $\delta=bq^{k+1}$ satisfies
\[
                         q^2\delta^*<\delta\leq q\delta^*. \tag{11}
\]
This deliberate undershoot preserves at least a $q^2$ fraction of the
principal's retained share while creating the margin
$\alpha-\alpha^*\geq\rho\delta^*$ needed to absorb response error.
Let $\alpha=1-\delta$ and let $\widetilde S_\alpha$ be any response returned
by the approximate oracle.  Approximate optimality at $\alpha$ and exact
optimality of $S^*$ at $\alpha^*$ give
\begin{align*}
\alpha f(\widetilde S_\alpha)-c(\widetilde S_\alpha)
 &\geq \alpha f(S^*)-c(S^*)-\tau,\\
\alpha^*f(S^*)-c(S^*)
 &\geq \alpha^*f(\widetilde S_\alpha)
          -c(\widetilde S_\alpha).
\end{align*}
Adding and using
$\alpha-\alpha^*=\delta^*-\delta\geq\rho\delta^*$ yields
\[
f(\widetilde S_\alpha)
\geq f(S^*)-\frac{\tau}{\rho\delta^*}.
\]
Consequently, since $\delta>q^2\delta^*$ and $\delta\leq q\delta^*$,
\[
\delta f(\widetilde S_\alpha)
 \geq q^2P^*-\frac{q\tau}{\rho}
 \geq (1-\eps)P^*-\frac{3\tau}{\eps}.
\]
The last inequality uses $q^2=(1-\eps/3)^2\geq1-\eps$.
Returning the queried pair with greatest realized principal utility can
only improve this bound.

Finally, covering a multiplicative interval of width $2n^2$ with ratio
$q=1-\eps/3$ takes $O(\log(n+1)/\eps)$ queries per singleton anchor.  The
algorithm therefore uses the stated number of $\tau$-best-response and
reward-value queries, plus the one cost-value query used to compute $L$.
\end{proof}

\clearpage
\section{Equal-revenue family and high-accuracy lower bound}
\label{app:high-accuracy}\label{sec:hard-instance}

The equal-revenue construction underlying the exact lower bound
of~\cite{DFGR26} makes the separation in \cref{cor:separation} especially
transparent.  It also provides an independent stress test for the two
structural quantities used by our algorithm.

Let $N=2^n-1$ be the largest binary index.  Assign each set $S\subseteq[n]$
its binary value $\sum_{i\in S}2^{i-1}$, and let $S_t$ denote the unique set
with index $t\in\{0,\ldots,N\}$; in particular,
$S_0=\varnothing$.  Give item $i$ reward $v_i=2^{i-1}$, so that
$f(S_t)=t$.  The
equal-revenue cost is defined by
\[
 c(S_0)=0,\qquad
 c(S_t)-c(S_{t-1})=\frac{t-1}{t}
 \quad(1\leq t\leq N).
\]
It is monotone and supermodular.  For each $S_t$, define its critical contract
share by
\[
                        \alpha_t=\frac{t-1}{t}.
\]
This contract induces $S_t$ for $1\leq t\leq N$.  As shown below, every such critical contract gives the
principal the same utility, hence the name \emph{equal revenue}.

\begin{proposition}[Anatomy of the equal-revenue instance]
\label{prop:equal-revenue}
On the preceding instance,
\[
       P^*=1,\qquad
       W=H_{2^n-1},
\]
where $H_m=\sum_{h=1}^m1/h$ is the $m$th harmonic number.  Moreover, if
$1\leq t\leq N$ and $j_t$ denotes a largest-reward action in $S_t$, i.e.,
$j_t\in\arg\max_{j\in S_t}v_j$, then
\[
 \frac{1}{2v_{j_t}}
 < 1-\alpha_t
 \leq\frac{1}{v_{j_t}}.                                 \tag{12}
\]
Thus the welfare--profit gap is $\Theta(n)$, while a singleton reward
locates the scale of every critical contract within a factor two.
\end{proposition}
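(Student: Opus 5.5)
The plan is to compute everything explicitly from the cost recursion. First I will unroll it to get $c(S_t)=\sum_{s=1}^t (s-1)/s = t-H_t$. The agent's utility at share $\alpha$ from $S_t$ is then
\[
u_t(\alpha)=\alpha t-t+H_t .
\]
Consecutive differences are $u_t(\alpha)-u_{t-1}(\alpha)=\alpha-(t-1)/t$. Since $(t-1)/t$ is increasing in $t$, the sequence $u_t(\alpha)$ is unimodal in $t$: it increases while $(t-1)/t<\alpha$ and decreases afterwards. The same fact gives supermodularity, since marginal costs increase with the index, and it is the step I would state carefully.

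This unimodality gives the best response: at $\alpha\in[\alpha_t,\alpha_{t+1})$ the maximizer is $S_t$. At $\alpha=\alpha_t$ exactly, $S_{t-1}$ and $S_t$ tie, and principal-favoring tie-breaking selects $S_t$, which has the larger reward.

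The principal's utility on that interval is $(1-\alpha)t$. This is maximized at the left endpoint, where it equals $(1/t)\cdot t=1$. At $\alpha=0$ the response is $S_0$ or $S_1$: both have utility $0$ there, so $S_1$ is selected and $P=1$. For $\alpha\ge\alpha_N$ we have $P\le(1-\alpha_N)N=1$. Hence $P^*=1$. For welfare, $\alpha=1$ induces $S_N$, giving $W=N-c(S_N)=H_N=H_{2^n-1}$. The claim that the gap is $\Theta(n)$ follows from $H_{2^n-1}=\Theta(n)$.

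For (12), $1-\alpha_t=1/t$ and $t=f(S_t)$. If $j_t$ is the largest item of $S_t$, then $v_{j_t}=2^{j_t-1}\le t\le\sum_{i\le j_t}2^{i-1}=2^{j_t}-1<2v_{j_t}$. Taking reciprocals gives $1/(2v_{j_t})<1/t\le 1/v_{j_t}$.

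The only mildly delicate point is the tie-breaking at endpoints, together with the requirement that each critical contract actually induces $S_t$. Unimodality with equality only between the adjacent pair $t-1,t$ handles both.
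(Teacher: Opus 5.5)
Your proposal is correct and follows essentially the same route as the paper. Both arguments telescope to $c(S_t)=t-H_t$, show that the response curve visits $S_0,\ldots,S_N$ in order because the crossing points $\alpha_t=(t-1)/t$ strictly increase, read off $P^*=1$ and $W=H_N$, and bracket $t$ between $v_{j_t}$ and $2v_{j_t}$ via its binary representation. Your unimodality of $t\mapsto u_t(\alpha)$ is the same ordering fact, stated a bit more carefully, including the ties at $\alpha_t$ and at $\alpha=0$.
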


\begin{proof}
By construction,
\[
 (1-\alpha_t)f(S_t)=\frac1t\cdot t=1.
\]
To identify the response curve, let
$u_t(\alpha)=\alpha t-c(S_t)$ be the agent's utility from $S_t$.
The consecutive lines $u_{t-1}$ and $u_t$ intersect at
\[
 c(S_t)-c(S_{t-1})=\alpha_t.
\]
These intersection points strictly increase with $t$, so the upper envelope
visits $S_0,S_1,\ldots,S_N$ in this order.  Thus the $\alpha_t$ are exactly
the critical contracts, and $P^*=1$.  Telescoping the cost recurrence gives
\[
 c(S_t)=\sum_{h=1}^t\frac{h-1}{h}
       =t-H_t.
\]
Hence $f(S_t)-c(S_t)=H_t$, which is maximized at $t=N$.

If $2^{k-1}$ is the highest power of two appearing in the binary
representation of $t$, then $v_{j_t}=2^{k-1}\leq t<2^k=2v_{j_t}$.
Since $1-\alpha_t=1/t$, taking reciprocals yields~(12).
\end{proof}

The lower bound lowers the cost of one hidden set, making one of exponentially
many nearly tied contracts uniquely optimal.  Exact optimization must discover
that set.  At coarser relative accuracy, its small utility advantage can be
ignored: \cref{prop:equal-revenue} shows that every candidate breakpoint has
a scale certified by one singleton reward, which our grid can find without
learning its identity.

The next subsection locates the limit of this explanation.  The hidden cost
perturbation is $\Theta(4^{-n})$, but because it applies to a set of reward
$\Theta(2^n)$, it creates a relative principal-utility advantage
$\Theta(2^{-n})$.  At accuracy comparable to that advantage, approximation
must again identify the hidden set.

\subsection{The high-accuracy query lower bound}

The exact lower bound alone does not explain whether the dependence on
$1/\eps$ in our FPTAS is necessary.  The same hidden-perturbation family
gives a partial answer: once the requested relative error reaches the scale
of the hidden perturbation, approximation again requires exponentially many
queries.

We will use one simulation fact from Appendix~E
of~\cite{DFGR26}.  For a nonnegative price vector $p$, let
$p(S)=\sum_{i\in S}p_i$ be the total price of $S$.  The $\sigma$-approximate
supply ${\cal D}_\sigma(p)$ is the collection of sets whose price-minus-cost
objective lies within $\sigma$ of optimal:
\[
 {\cal D}_\sigma(p)=
 \left\{S\subseteq[n]:
 p(S)-c(S)\geq
 \max_{T\subseteq[n]}\{p(T)-c(T)\}-\sigma\right\}.
\]
For the equal-revenue cost above, if
\[
 0<\sigma<
 \frac12\min_{1\leq t<N}(\alpha_{t+1}-\alpha_t),
\]
then ${\cal D}_\sigma(p)$ has size $O(n^2)$ for every $p$ and can be
enumerated from the known base cost without further oracle calls, though
the enumeration need not be computationally efficient.  Consequently, if a
perturbed cost
differs from $c$ on one set by at most $\sigma$, every exact supply response
to the perturbed cost lies in ${\cal D}_\sigma(p)$; querying the perturbed
cost on these candidates simulates the supply query with $O(n^2)$ value
queries.  This is the sparse-supply property used below.

\begin{proof}[Proof of \cref{thm:approx-lower}]
Fix $n\geq2$, let $N=2^n-1$, and use the equal-revenue instance from
\cref{sec:hard-instance}.  Thus $f(S_t)=t$,
$\alpha_t=(t-1)/t$, and every critical contract gives principal utility
one.  Let ${\cal K}$ be the candidate set of hidden indices and let $\zeta$
be the amount by which the hidden set's cost is reduced:
\[
 {\cal K}=\{\lceil N/2\rceil,\ldots,N\},
 \qquad
 \zeta=\frac{1}{8N^2}.
\]
For each hidden index $k\in{\cal K}$, define $c_k$ by lowering only the cost
of $S_k$ by $\zeta$.
Since
\[
 \alpha_{t+1}-\alpha_t=\frac{1}{t(t+1)}
 \quad\text{and}\quad
 \zeta=\frac{1}{8N^2}
 <\frac{1}{2N(N-1)},
\]
the preceding sparse-supply property applies with $\sigma=\zeta$.

These perturbations preserve nonnegativity, monotonicity, and
supermodularity.  For completeness, write
$c(S_t)=t-H_t$.  Its consecutive increments
$d_t=c(S_t)-c(S_{t-1})=1-1/t$ are increasing.  Every strict
supermodular marginal gap has the following form.  If the indices of
$S\subset T$ differ by $h\geq1$ and the added action has binary value $a$,
then the gap is
\[
 \sum_{r=1}^{a}\bigl(d_{x+h+r}-d_{x+r}\bigr)
 =
 \sum_{r=1}^{a}\frac{h}{(x+r)(x+h+r)}
 \geq\frac1{N^2},
\]
where $x$ is the index of $S$.  The four sets in such a marginal comparison
are distinct, so changing one set value alters this positive gap by at most
$\zeta<1/N^2$.  A monotonicity comparison involving the perturbed set has
gap at least $1-1/k\geq1/2$ before perturbation and therefore remains valid;
all other inequalities are unchanged.

The line for $S_k$ moves upward by $\zeta$.  Its intersection with the
preceding response line shifts from $\alpha_k$ to
$\alpha'_k=\alpha_k-\zeta$, because consecutive rewards differ by one.
Hence its principal utility at entry becomes
\[
                 (1-\alpha'_k)f(S_k)=1+\zeta k
                 \geq1+\frac{1}{16N}.
\]
The small-perturbation inequality above preserves the order of the response
lines.  The lifted line for $S_k$ enters at $\alpha'_k$.  If $k<N$, it
remains active until its intersection with $S_{k+1}$, which shifts to
$\alpha_{k+1}+\zeta$; if $k=N$, it remains active through $\alpha=1$.
Every other response line enters at its original breakpoint, except
$S_{k+1}$ when that line exists, which enters later.  Principal utility
decreases with $\alpha$ while a fixed response is active.  Therefore every
response other than $S_k$ yields utility at most one, whereas $S_k$ yields
its strictly larger maximum at $\alpha'_k$.  This makes $\alpha'_k$ the
unique optimal contract under the stipulated tie-breaking.

Set the target relative error to $\gamma_n=1/(32N)$.  Since
\[
 (1-\gamma_n)\left(1+\frac1{16N}\right)>1,
\]
every successful $(1-\gamma_n)$-approximation must induce $S_k$ and
therefore identifies the hidden index.

It remains to transfer this identification requirement to supply queries.
If the base cost $c$ is known, a value query distinguishes $c_k$ from $c$
only when it queries the hidden set.  With $k$ uniform in ${\cal K}$,
identification therefore needs $\Omega(N)$ value queries by Yao's
principle.  By the sparse-supply property established above, a supply query
to $c_k$ can be simulated by enumerating ${\cal D}_\zeta(p)$ and making
$O(n^2)$ value queries to $c_k$.  Indeed, an optimizer under $c_k$ is within
$\zeta$ of optimal under $c$, because the perturbation can improve the
objective of only one set and by only $\zeta$.  Consequently, an algorithm
making $Q$ supply
queries and polynomially many additional value queries yields
$O(Qn^2)+\operatorname{poly}(n)$ value queries.  The $\Omega(N)$
identification lower bound implies
\[
                    Q\geq\frac{N}{\operatorname{poly}(n)}
                     =2^{\Omega(n)}.
\]
\end{proof}

\clearpage
\small
\section{Tightness constructions}\label{app:tightness}

\begin{proof}[Proof of \cref{prop:tightness}, part~(i)]
Let the ground set be
$A\mathbin{\dot\cup}\{b_1,\ldots,b_m\}$, where $A$ contains $m$
unit-reward actions and action $b_r$ has reward $R_r=2^rm$ for $r\in[m]$.
Rewards are additive.
Set
\[
 R_0=m,\qquad P_0=1+\eta,\qquad
 \alpha_0=1-\frac{P_0}{m},\qquad c(A)=m-P_0.
\]
For $r\geq1$, recursively define
\[
\alpha_r=1-\frac1{R_r},
\qquad
c(\{b_r\})=c(\{b_{r-1}\})
  +\alpha_r(R_r-R_{r-1}),
\]
where $\{b_0\}$ denotes $A$ and $c(\{b_0\})=c(A)$.
Every other nonempty set $S$ receives cost $c(S)=2f(S)$.
These costs are nonnegative.  Dividing every reward and cost by $f([n])$
leaves all responses and ratios unchanged if rewards must lie in $[0,1]$.

The only lines that can appear on the agent's upper envelope are
$\varnothing,A,\{b_1\},\ldots,\{b_m\}$.  Their slopes are strictly
increasing, consecutive lines intersect at
\[
0<\alpha_0<\alpha_1<\cdots<\alpha_m<1.
\]
Thus they appear in this order.  The principal receives $P_0=1+\eta$ from
$A$ at $\alpha_0$, but only $(1-\alpha_r)R_r=1$ at every later
intersection; utility decreases between intersections.  Hence the unique
optimum is $(\alpha^*,S^*)=(\alpha_0,A)$.

Let $W_0=P_0$ and $W_r=R_r-c(\{b_r\})$.  The recurrence gives
\[
W_r-W_{r-1}
=(1-\alpha_r)(R_r-R_{r-1})
 =\frac12.
\]
Therefore $W=P_0+m/2$, while $1-\alpha^*=P_0/m$ and
$\max_{i\in S^*}f(\{i\})=1$.  Substitution proves part~(i).
\end{proof}

\begin{proof}[Proof of \cref{prop:tightness}, part~(ii)]
Let
\[
 Z=\sum_{h=0}^{n-1}r^h,\qquad
 v_i=\frac{r^{i-1}}{Z},\qquad
 P=v_1,\qquad
 a_i=1-\frac{P}{v_i}=1-r^{-(i-1)}.
\]
Define $d_1=0$ and, for $i\geq2$,
\[
 d_i=d_{i-1}+a_i(v_i-v_{i-1}),
\]
and set
\[
 c(S)=\sum_{i\in S}d_i+2\binom{|S|}{2}.
\]
The marginal cost of adding $i$ to $S$ is $d_i+2|S|$, so $c$ is monotone
and supermodular.  Sets of size at least two have cost at least $2$ but
$\alpha f(S)\leq1$, so they are never best responses.  Since
$(d_i-d_{i-1})/(v_i-v_{i-1})=a_i$ and the $a_i$ increase, the upper envelope
visits the singletons in order.  Each gives principal utility
$(1-a_i)v_i=P$, so $P^*=P$.  Finally, telescoping at $\alpha=1$ gives
\begin{align*}
W=v_n-d_n
 &=P+\sum_{i=2}^n(1-a_i)(v_i-v_{i-1})\\
 &=P\left(1+(n-1)(1-1/r)\right)
  =P\left(n-\frac{n-1}{r}\right).
\end{align*}
\end{proof}

\end{document}